\documentclass[reqno]{amsart}
\usepackage{graphicx,times,amssymb,amsmath,bbm,cite}
\usepackage{algorithm}
\usepackage{enumerate}
\usepackage{latexsym}
\usepackage{multirow}
\oddsidemargin 0in
\evensidemargin=\oddsidemargin
\textwidth 6.5in
\interdisplaylinepenalty=2
\allowdisplaybreaks

\newtheorem{theorem}{Theorem}[section]

\newtheorem{definition}[theorem]{Definition}

\newcommand\remove[1]{}
\newcommand{\nc}{\newcommand}

\allowdisplaybreaks
\nc\bfa{{\boldsymbol a}}\nc\bfA{{\bf A}}\nc\cA{{\mathcal A}}
\nc\bfb{{\boldsymbol b}}\nc\bfB{{\boldsymbol B}}\nc\cB{{\mathcal B}}
\nc\bfc{{\boldsymbol c}}\nc\bfC{{\bf C}}\nc\cC{{\mathcal C}}
\nc\sC{{\mathscr C}}
\nc\bfd{{\boldsymbol d}}\nc\bfD{{\bfD}}
\nc\cD{{\mathcal D}}
\nc\bfe{{\boldsymbol e}}\nc\bfE{{\bf E}}\nc\cE{{\mathcal E}}
\nc\bff{{\boldsymbol f}}\nc\bfF{{\bf F}}\nc\cF{{\mathcal F}}
\nc\bfg{{\boldsymbol g}}\nc\bfG{{\bf G}}\nc\cG{{\mathcal G}}
\nc\bfh{{\boldsymbol h}}\nc\bfH{{\bf H}}\nc\cH{{\mathcal H}}
\nc\bfi{{\boldsymbol i}}\nc\bfI{{\bf I}}\nc\cI{{\mathcal I}}\nc\sI{{\mathscr I}}
\nc\bfj{{\boldsymbolj}}\nc\bfJ{{\bf J}}\nc\cJ{{\mathcal J}}
\nc\bfk{{\boldsymbolk}}\nc\bfK{{\bf K}}\nc\cK{{\mathcal K}}
\nc\bfl{{\boldsymboll}}\nc\bfL{{\bf L}}\nc\cL{{\mathcal L}}
\nc\bfm{{\boldsymbolm}}\nc\bfM{{\bf M}}\nc\cM{{\mathcal M}}
\nc\bfn{{\boldsymboln}}\nc\bfN{{\bf N}}\nc\cN{{\mathcal N}}
\nc\bfo{{\boldsymbolo}}\nc\bfO{{\bf O}}\nc\cO{{\mathcal O}}
\nc\bfp{{\boldsymbolp}}\nc\bfP{{\bf P}}\nc\cP{{\mathcal P}}
\nc\eP{{\EuScriptP}}\nc\fP{{\mathfrak P}}
\nc\bfq{{\boldsymbol q}}\nc\bfQ{{\bf Q}}\nc\cQ{{\mathcal Q}}
\nc\bfr{{\boldsymbol r}}\nc\bfR{{\bf R}}\nc\cR{{\mathcal R}}
\nc\bfs{{\boldsymbol s}}\nc\bfS{{\boldsymbol S}}\nc\cS{{\mathcal S}}
\nc\bft{{\boldsymbol t}}\nc\bfT{{\bf T}}\nc\cT{{\mathcal T}}
\nc\bfu{{\boldsymbol u}}\nc\bfU{{\bf U}}\nc\cU{{\mathcal U}}
\nc\bfv{{\boldsymbol v}}\nc\bfV{{\bf V}}\nc\cV{{\mathcal V}}
\nc\bfw{{\boldsymbol w}}\nc\bfW{{\bf W}}\nc\cW{{\mathcal W}}
\nc\bfx{{\boldsymbol x}}\nc\bfX{{\bf X}}\nc\cX{{\mathcal X}}
\nc\bfy{{\boldsymbol y}}\nc\bfY{{\bf Y}}\nc\cY{{\mathcal Y}}
\nc\bfz{{\boldsymbol z}}\nc\bfZ{{\bf Z}}\nc\cZ{{\mathcal Z}}


\begin{document}

\title{Polar codes for distributed hierarchical source coding}
\author[Min Ye]{Min Ye$^\ast$} 
\thanks{$^\ast$
Department of ECE and Institute for Systems Research, 
			University of Maryland, College Park, MD 20742, Email: yeemmi@gmail.com
Research supported in part by NSF grant CCF1217545.}
\author[A. Barg]{Alexander Barg$^{\ast\ast}$}\thanks{$^{\ast\ast}$
Department of ECE and Institute for Systems Research, University
of Maryland, College Park, MD 20742, and IITP, Russian Academy of
Sciences, Moscow, Russia. Email: abarg@umd.edu. Research supported
in part by NSF grants CCF1217545, CCF1217894, and NSA
98230-12-1-0260, Email: abarg@umd.edu.}

\begin{abstract}We show that polar codes can be used to achieve the rate-distortion functions in the problem 
of hierarchical source coding also known as the successive refinement problem. We also analyze the distributed 
version of this problem, constructing a polar coding scheme that achieves the rate distortion functions 
for successive refinement with side information.
\end{abstract}

\maketitle\section{Introduction: Hierarchical source coding}
Hierarchical source coding, also known as successive refinement of information, was introduced by Koshelev \cite{Koshelev80,Koshelev81} and Equitz and Cover \cite{Equitz91}. This problem is concerned with the construction of a source code for a discrete memoryless source $X$ with respect to a given distortion measure $d_1$ that can be further refined to represent the same source within another distortion measure $d_2$ so that both representations approach the best possible compression rates for the given distortion values. This property is also termed divisibility of sources and it has an obvious interpretation in the context of $\epsilon$-nets in metric spaces \cite{Koshelev94}. Koshelev found a sufficient condition for successive refinement in \cite{Koshelev80}, showing that the source is divisible if the coarse description $X_1$ is independent of $X$ given the fine description $X_2$,  and Equitz and Cover showed that this condition is also necessary. If the Markov condition is not satisfied, then attaining the rate-distortion functions for both descriptions is impossible, and the excess rate needed to represent the source was quantified by Rimoldi \cite{Rimoldi94}  (see also Koshelev  \cite{Koshelev80}). As observed in \cite{Equitz91}, successive refinement is a particular case of the multiple description problem for which the region of achievable rates was established 
by Ahlswede \cite{Ahlswede85} and El Gamal and Cover \cite{ElGamal91}. A constructive scheme based on codes with low-density generator matrices together
with message-passing encoding was presented by Zhang et al. \cite{Zhang09}.

A  version of the successive refinement problem that incorporates side information used to represent the
source  was considered by Steinberg and Merhav in \cite{Steinberg04}. In this problem, the side information is
expressed as a pair of random variables that form a Markov chain with the source random variables and are used
at the initial stage of representing the source and at the refinement stage, respectively. Paper \cite{Steinberg04}
found the minimum possible rates for reproducing the source at the given distortion levels in the presence of 
side information. 

The aim of this paper is to construct an explicit scheme for successive refinement for the aforementioned problems
using polar codes.
Polar codes were initially designed to support communication at rates approaching capacity of binary-input symmetric memoryless 
channels \cite{arikan2009}. Subsequently they were shown to approach optimal performance for a number of information-theoretic problems with two or more users. A sampling of results includes lossless and lossy source coding problems \cite{Arikan2010,kor09a}, multiple-access channels \cite{abb12}, the degraded wiretap channel \cite{Mahdavifar2011}, as well as a range of other problems that previously relied on random coding (see the recent preprint \cite{Mondelli14} for a more detailed overview of applications of polar codes). 
Recently Honda and Yamamoto \cite{Honda13} showed that it is possible to modify the construction of polar codes so that the coding scheme supports 
capacity-achieving communication for channels that are not necessarily symmetric.
This result paves way for new applications of polar codes such as achieving optimal rates for broadcast channels \cite{Mondelli14}. In this paper we note that the asymmetric polar coding scheme can be also used for achieving rate-distortion functions in a range of problems of multiterminal source beginning with the basic 
successive refinement problem and extending to its distributed version \cite{Steinberg04} as well as other related schemes.
The construction of polar codes for successive refinement is an easy combination of the ideas of \cite{Honda13} and the earlier construction of polar codes achieving the rate-distortion function \cite{kor09a}. 
The distributed successive refinement problem is somewhat more difficult because of the need to incorporate the side
information in the analysis of the decoder of polar codes at the representation stage. In this part we 
first construct a polar coding scheme for the general version of the Wyner-Ziv problem of distributed compression
with side information and then use it to address the case of successive refinement. Note that polar codes for
the particular case of the Wyner-Ziv problem in which the side information is additive were constructed in an earlier work \cite{kor09a}.

In Sect.~\ref{sect:prelim} we introduce notation for polar codes, while the remaining Sections \ref{sect:SR}, \ref{sect:WZ} are devoted to the
two versions of the successive refinement problem discussed above.

\section{Preliminaries on polar codes}\label{sect:prelim}
In this part we set up notations for our application of polar codes. 
Let $n=2^m$ for some $m\in \mathbb{N}$. We use the notation $[n]=\{1,2,\dots,n\}$ and use the shorthand
notation $X^n$ for the vector $(X_1,X_2,...,X_n).$ Similarly we write $X_i^j$ instead of
$(X_i,...,X_j)$ and use analogous notation for other vectors of random variables and their realizations.

Define the polarizing matrix (or the Ar{\i}kan transform matrix)
as $G_n=B_n F^{\otimes m}$, where $F=\text{\small{$\Big(\hspace*{-.05in}\begin{array}{c@{\hspace*{0.05in}}c}
    1&0\\1&1\end{array}\hspace*{-.05in}\Big)$}}$,
$\otimes$ is Kronecker product of matrices, and $B_n$ is a 	``bit reversal'' permutation 
matrix. In his landmark paper \cite{arikan2009}, Ar{\i}kan showed that given a binary-input
channel $W$, there is a sequence of linear codes, whose generator matrices are appropriately chosen from the rows of $G_n$, achieving the symmetric capacity of $W.$

Let $U$ be a random variable defined on $\{0,1\},$ let $V$ be a discrete random variable supported on a finite set $\cV,$ and let $P_{UV}$ be their joint distribution. Define the 
Bhattacharyya parameter $Z(U|V)$ as follows:
    $$
Z(U|V)=2\sum_{v \in \mathcal{V}} P_V(v) \sqrt{{P}_{U|V}(0|v) {P}_{U|V}(1|v)}.
    $$

Consider a binary random variable $X \sim {P}_X$ and let $X^n$ denote $n$ independent copies of $X.$ 
Consider random variables $U^n=(U_1,\dots,U_n)$ obtained from $X^n$ using the transformation $U^n=X^n G_n.$
Define the subsets $\mathcal{H}_X$ and $\mathcal{L}_X$ of $[n]$ as follows (definition of both sets depends on $n$, but for simplicity we omit $n$ in the notations):
\begin{equation}\label{def1}
    \begin{aligned}
\mathcal{H}_X&=\{i \in [n]:Z(U_i|U^{i-1}) \geq 1-\delta_n \} \\
\mathcal{L}_X&=\{i \in [n]:Z(U_i|U^{i-1}) \leq \delta_n \}
   \end{aligned}
\end{equation}
where $\delta_n\to 0$ as $n\to\infty$. In other words, $\cL_X$ consists of the indices for which the
bits $U_i$ are almost deterministic given the values of $U^{i-1},$ while $\cH_X$ includes bits that are
almost uniformly random given previous indices. The source polarization theorem of \cite{arikan2009,Arikan2010} asserts that
\begin{equation} \label{eq:cardi1}
 \begin{aligned}
&\lim_{n \to \infty} \frac{1}{n}|\mathcal{H}_{X}|=H(X) \\
&\lim_{n \to \infty} \frac{1}{n}|\mathcal{L}_{X}|=1-H(X)
\end{aligned}
\end{equation}
i.e., almost all indices fall into either $\mathcal{H}_X$ or $\mathcal{L}_X$. Moreover, the quantity
$\delta_n$ behaves as $2^{-n^{\beta}},$ where $\beta$ can be any constant that satisfies $0<\beta<1/2.$

Extension of these results to the case with side information can be phrased as follows. 
Let $(X,Y) \sim {P}_{XY}$ be a pair of finite discrete random variables, and assume that $X$ is binary. 
As before, let $U^n=X^n G_n.$ 
Define the index subsets $\mathcal{H}_{X|Y}$ and $\mathcal{L}_{X|Y}$ of $[n]$ as follows:
\begin{equation}\label{def2}
    \begin{aligned}
  \mathcal{H}_{X|Y}&=\{i \in [n]:Z(U_i|U^{i-1},Y^n) \geq 1-\delta_n \} \\
  \mathcal{L}_{X|Y}&=\{i \in [n]:Z(U_i|U^{i-1},Y^n) \leq \delta_n \}.
   \end{aligned}
\end{equation}
Similarly to \eqref{eq:cardi1} we have \cite{arikan2009}
\begin{equation} \label{eq:cardi2}
 \begin{aligned}
&\lim_{n \to \infty} \frac{1}{n}|\mathcal{H}_{X|Y}|=H(X|Y) \\
&\lim_{n \to \infty} \frac{1}{n}|\mathcal{L}_{X|Y}|=1-H(X|Y).
\end{aligned}
\end{equation}
By an application of the Cauchy-Schwarz inequality it is easy to see that
  $
  Z(U_i|U^{i-1},Y^n)\le Z(U_i|U^{i-1})
  $
and therefore the subsets defined above are related as follows:
\begin{equation} \label{relation}
 \begin{aligned}
    \mathcal{H}_{X|Y} &\subseteq \mathcal{H}_X \\
\mathcal{L}_X &\subseteq \mathcal{L}_{X|Y}.
\end{aligned}
\end{equation}
\remove{
\begin{equation*}
\begin{aligned}
Z(U_i|U^{i-1},Y^n) &=2 \sum_{u^{i-1},y^n} \sqrt{{P}_{U_i,U^{i-1},Y^n}(0,u^{i-1},y^n) {P}_{U_i,U^{i-1},Y^n}(1,u^{i-1},y^n)} \\
&=2 \sum_{u^{i-1}} \Big( \sum_{y^n}\sqrt{{P}_{U_i,U^{i-1},Y^n}(0,u^{i-1},y^n) {P}_{U_i,U^{i-1},Y^n}(1,u^{i-1},y^n)} \Big) \\
&\overset{\text{(a)}}{\leq} 2 \sum_{u^{i-1}} \Big( \sqrt{ (\sum_{y^n}{P}_{U_i,U^{i-1},Y^n}(0,u^{i-1},y^n))( \sum_{y^n}{P}_{U_i,U^{i-1},Y^n}(0,u^{i-1},y^n))} \Big) \\
&= 2 \sum_{u^{i-1}} \sqrt{{P}_{U_i,U^{i-1}}(0,u^{i-1}) {P}_{U_i,U^{i-1}}(1,u^{i-1})}
=Z(U_i|U^{i-1}),
\end{aligned}
\end{equation*}
where (a) follows from Cauchy–Schwarz inequality. Thus, for all $n$ 
\begin{equation*}
\begin{aligned}
\mathcal{H}_{X|Y} \subseteq \mathcal{H}_X, \\
\mathcal{L}_X \subseteq \mathcal{L}_{X|Y}.
\end{aligned}
\end{equation*}
}

\section{Successive refinement with polar codes}\label{sect:SR}

Let $X\sim P_X$ be a discrete memoryless source with a finite source alphabet $\cX.$ 
Let $\cT$ be a finite reproduction alphabet and let $d:\cX\times\cT\to[0,\infty)$ be a distortion function. 
 Let $R(D)$ be the rate distortion function given by $R(D)=\min_{P_{T|X}}I(X;T),$ where $P_{T|X}$ is such that
 $E_{XT}(d(X,T))\le D.$

Below we consider only the case of binary reproduction alphabets (extensions to other alphabets can be
easily accomplished based on the multiple methods available in the literature, e.g. \cite{Mori14,park13}). 
Suppose that there 
exist encoding functions
  \begin{align} \label{eq:sr1}
    &\phi_1:\cX^n\to [M_1]\\
   & \phi_2:\cX^n\to[M_2]\label{eq:sr1-1}
    \end{align}
    and decoding functions
 \begin{align}\label{eq:sr2}
   &\psi_1:[M_1]\to \cT^n\\
   &\psi_2:[M_1]\times[M_2]\to \cT^n\label{eq:sr2-1}   
   \end{align} 
such that 
  \begin{align}
   &E_{X^n}d(X^n,\psi_1(\phi_1(X^n)))\le D_1\label{eq:sr3}\\
   &E_{X^n}d(X^n,\psi_2(\phi_1(X^n),\phi_2(X^n)))\le D_2. \label{eq:sr31}
  \end{align}
  
Let  $M_1=2^{nR_1},M_2=2^{n(R_2-R_1)},$ where $(R_1,R_2)$ are the rate values for the two representations of the source $X.$
Given a distortion pair $(D_1,D_2),$ we say that the rate pair $(R_1,R_2)$ is achievable if for any $\epsilon_1>0,\epsilon_2>0,\delta>0$ there exists a sufficiently large $n=n(\epsilon_1,\epsilon_2,\delta)$ such that there exists a coding scheme satisfying \eqref{eq:sr1}-\eqref{eq:sr31} with block length $n$, rates not exceeding $R_1+\epsilon_1,R_2+\epsilon_1+\epsilon_2,$
and distortions $D_1+\delta,D_2+\delta.$

The source $X$ is said to be {\em successively refinable} with distortions $D_1$ and $D_2,$
$D_2\le D_1,$ if the pair of rate values $(R(D_1),R(D_2))$ is achievable.
The following result characterizes the set of achievable rate pairs.
\begin{theorem} \label{thm:sr} {\rm \cite{Koshelev80,Equitz91}}
Let $X$ be a source and let $T,W$ be two binary random variables.
The source is successively refinable if and only if there exists a conditional distribution 
${P}_{TW|X}$ with
$$
  E_{XT}d(X,T)\le D_1, \quad E_{XW}d(X,W)\le D_2
$$
\begin{equation} \label{eq:cond}
     I(X;T)=R(D_1),\quad I(X;W)=R(D_2)
\end{equation}
  and such that $X,W,T$ satisfy the Markov condition
\begin{equation} \label{eq:Markov}
X \rightarrow W \rightarrow T.
\end{equation}
\end{theorem}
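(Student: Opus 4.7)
The ``only if'' part is the classical Equitz--Cover converse of \cite{Equitz91}, which I would simply invoke; the content lies in the achievability. My plan is a two-layer polar coding scheme built on a joint test channel $P_{TW|X}$ satisfying \eqref{eq:cond}--\eqref{eq:Markov}. The first layer polar-quantizes the source to $T$ at rate $I(X;T)=R(D_1)$, and the second layer refines the reconstruction to $W$ using the first-stage output $T^n$ as common side information at the incremental rate $I(X;W\mid T)$. It is exactly the Markov condition \eqref{eq:Markov} that forces $I(X;T)+I(X;W\mid T)=I(X;W)=R(D_2)$, so that the two layers aggregate to $R(D_2)$ with no excess and the total rate hits the achievability target.

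For the first layer, I would use the Honda--Yamamoto lossy source coding construction of \cite{Honda13,kor09a}: set $U^n = T^n G_n$ and take $\cH_T\setminus \cH_{T|X}$ as the information set, whose normalized size approaches $H(T)-H(T|X)=I(X;T)=R(D_1)$ by \eqref{eq:cardi1}--\eqref{eq:cardi2}. The encoder generates the bits on this set by successive sampling from $P(U_i\mid U^{i-1},X^n)$; for $i \notin \cH_T$ the bit $U_i$ is determined near-deterministically by $U^{i-1}$, while for $i \in \cH_{T|X}$ one uses shared randomness, which can be eliminated later by the standard chaining trick. Combining the Bhattacharyya bounds in \eqref{def1} with a Pinsker-type argument yields a total-variation bound showing that the induced law of $(X^n,T^n)$ is within $o(1)$ of $P_{XT}^n$, hence $E d(X^n,T^n)\le D_1+\delta$.

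For the second layer, the decoder already holds $T^n$ from the first stage, so I would treat $T^n$ as side information at both the encoder and the decoder. Setting $V^n = W^n G_n$ and using the natural multi-variable extension of the sets \eqref{def2}, the information set becomes $\cH_{W|T}\setminus \cH_{W|X,T}$, whose normalized size tends to $H(W|T)-H(W|X,T)=I(X;W\mid T)=R(D_2)-R(D_1)$, exactly the desired refinement rate. The encoder samples the bits on this set from $P(V_i\mid V^{i-1}, X^n)$ and transmits them; the decoder reconstructs $V^n$, and thence $W^n$, using $T^n$, the transmitted bits, and the near-deterministic predictions for the remaining indices.

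The main technical obstacle is the joint analysis across the two layers: the second layer conditions on the actual $T^n$ produced by the first layer rather than on an idealized sample from $P_T^n$, so invoking the single-layer bound twice is not sufficient. I would handle this by controlling the KL divergence between the full induced law on $(X^n,T^n,W^n)$ and the target $P_{XTW}^n$ via the chain rule of KL divergence across the two stages, exploiting the nesting \eqref{relation} applied to $(T,X)$ and to $(W,T)$ to guarantee that the frozen sets of the two layers are compatible. Once joint total-variation closeness is in hand, both distortion bounds \eqref{eq:sr3}--\eqref{eq:sr31} follow simultaneously by the boundedness of $d$.
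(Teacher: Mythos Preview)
Your plan is essentially the paper's own construction: a two-layer polar scheme with information set $\cI_T=(\cL_T\cup\cH_{T|X})^c$ at the coarse layer and $\cI_W=(\cL_{W|T}\cup\cH_{W|TX})^c$ at the refinement layer, together with the observation that the Markov chain $X\to W\to T$ forces $I(X;W|T)=R(D_2)-R(D_1)$. (Note that Theorem~\ref{thm:sr} itself is merely cited in the paper; the paper's contribution, and what you are sketching, is the explicit polar achievability that follows it.) Two technical points are worth flagging.

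First, the paper does not go through KL divergence and Pinsker. It bounds $\|P_{U^nX^n}-Q_{U^nX^n}\|_1$ directly by the telescoping identity $\prod B_i-\prod A_i=\sum_i(B_i-A_i)\prod_{j<i}A_j\prod_{j>i}B_j$, and then controls each summand by the corresponding Bhattacharyya parameter, yielding an $O(2^{-n^{\beta'}})$ bound. Your KL-plus-Pinsker route would also work, but the telescoping argument is what the paper actually carries out, and it gives the joint-layer bound by first controlling $\|P_{T^nX^n}-Q_{T^nX^n}\|_1$ and then adding and subtracting $Q(v^n|t^n,x^n)P(t^n,x^n)$ to lift to $(W^n,T^n,X^n)$; no separate ``compatibility of frozen sets via \eqref{relation}'' step is needed.

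Second, your refinement encoder samples from $P(V_i\mid V^{i-1},X^n)$, omitting $T^n$. The paper conditions on $T^n$ as well, i.e.\ samples from $P(V_i\mid V^{i-1},T^n,X^n)$, and for $i\in\cL_{W|T}$ sets $v_i$ using $P(V_i\mid V^{i-1},T^n)$. This is not cosmetic: the Markov chain $X\to W\to T$ does \emph{not} make $T^n$ redundant given $(V^{i-1},X^n)$, and since the decoder recovers the near-deterministic bits from $(V^{i-1},T^n)$, the encoder must match that rule. With that correction your sketch and the paper's argument coincide; derandomization is then by averaging over the frozen bits (the paper's Theorem on the expected distortions), which is the same device you call the ``chaining trick.''
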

The Markov property \eqref{eq:Markov} implies that $I(X;WT)=I(X;W)=R(D_2).$ 
Combined with \eqref{eq:cond}, we obtain
    \begin{align*}
H(W|T)-H(W|T,X)&=I(X;W|T)\\
&=I(X;W,T)-I(X;T)=R(D_2)-R(D_1).
     \end{align*}

\vspace*{.1in}Let ${P}_{TWX}={P}_{TW|X}{P}_X$ be the joint distribution of the triple $(T,W,X)$ that satisfies the conditions 
of the theorem. Let $(T^n,W^n,X^n)$ be a sequence of $n$ independent copies of the triple $(T,W,X)$. 
Define random vectors $U^n=T^nG_n$ and $V^n=W^nG_n.$ Below we use various conditional distributions derived from the joint
distribution ${P}_{U^nT^nV^nW^nX^n}.$ 

Define the index subsets $\mathcal{H}_T, \mathcal{L}_T, \mathcal{H}_{T|X}$, 
$\mathcal{L}_{T|X}, \mathcal{H}_{W|T}$, $\mathcal{L}_{W|T}$, $\mathcal{H}_{W|TX}$, $\mathcal{L}_{W|TX}$ in the way
analogous to \eqref{def1} and \eqref{def2}. For instance,
  \begin{gather*}
  \cH_{W|T}=\{i\in [n]: Z(V_i|V^{i-1},T^n)\ge 1-\delta_n\}\\
  \cH_{W|TX}=\{i\in[n]: Z(V_i|V^{i-1},X^n,T^n)\ge 1-\delta_n\}
  \end{gather*}
etc. Relationships analogous to \eqref{eq:cardi1}, \eqref{eq:cardi2}, \eqref{relation} 
hold. For instance, let $\mathcal{I}_T=(\mathcal{L}_T \cup \mathcal{H}_{T|X})^c$, $\mathcal{I}_W= (\mathcal{L}_{W|T} 
\cup \mathcal{H}_{W|TX})^c$, where $^c$ refers to the complement in $[n],$ then
    \begin{equation*}
\begin{aligned}
&\lim_{n \to \infty} \frac{1}{n}|\mathcal{I}_T|=I(X;T)=R(D_1) \\
&\lim_{n \to \infty} \frac{1}{n}|\mathcal{I}_W|=H(W|T)-H(W|T,X)=R(D_2)-R(D_1).
\end{aligned}
     \end{equation*}

Suppose that we are given the source sequence $x^n.$  
To construct the coding scheme, let us partition the set of indices according to
   \begin{equation}\label{eq:part}
     [n]=\cH_{T|X}\cup\cL_T\cup\cI_T.
   \end{equation}
Using the relations between the Bhattacharyya parameters and the corresponding entropies \cite{arikan2009},
we observe that in the successive cancellation coding scheme we should use the indices in the set $\cI_T.$ Indeed, if $i \in \cH_{T|X}$ then $Z(U_i|U_1^{i-1},X^n) \approx 1$ and
so $H(U_i|U_1^{i-1},X^n)\approx 1$ and
  $$
  I(U_i;X^n|U_1^{i-1})=H(U_i|U_1^{i-1})-H(U_i|U_1^{i-1},X^n) \approx0.
  $$
Therefore, the bits $u_i,i\in\cH_{T|X}$ are nearly independent of the source sequence conditional
on the previously found values $u_1^{i-1}.$ Likewise we observe that $I(U_i;X^n|U_1^{i-1}) \approx 0, i \in \cL_T,$ 
and so the bits indexed by $\cL_T$ are almost deterministic.
At the same time, if $i \in \cI_T$ then $Z(U_i|U_1^{i-1},X^n) \approx 0$ and $Z(U_i|U_1^{i-1}) \approx1,$
so 
   $$
   I(U_i;X^n|U_1^{i-1})=H(U_i|U_1^{i-1})-H(U_i|U_1^{i-1},X^n)\approx 1.
   $$

These considerations motivate the following encoding procedure.
First, if $i\in \cH_{T|X}$ then we put $u_i=0$ or $1$ with probability $1/2$ independently of the source sequence and each other. 
Following the accepted usage in polar codes, we call the values $u_i, i\in \cH_{T|X}$ frozen bits and assume that they are available both to the encoder and decoder.
Next assign the values $u_i,i\in \cI_T\cup\cL_T$ successively as follows. Assume that the sequence $u^{i-1}, i\ge 0$ has been chosen. If $i\in \cI_T$, choose $u_i$ in a randomized way according to the distribution
  \begin{equation}\label{eq:random1}
     \Pr(u_i=a)={P}_{U_i|U^{i-1},X^n}(a|u^{i-1},x^n), \quad a=0,1
 \end{equation}
and if $i\in \cL_T,$ put
  \begin{equation}\label{eq:func1}
    u_i=u_i(u^{i-1})\triangleq\arg\max_{a\in\{0,1\}} {P}_{U_i|U^{i-1}}(a|u^{i-1}) .
  \end{equation}
This concludes the description of the encoding function $\phi_1$ in \eqref{eq:sr1}.
  
The encoder $\phi_2$ relies on the sequence $u^n$ as well as the source sequence $x^n$ \eqref{eq:sr1-1} and is designed
as follows. We begin with finding $t^n=u^nG_n$ (note that $G_n^{-1}=G_n$) which is then used to compute the sequence $v^n$.
Partition the  set of coordinates as follows:
   $$
   [n]=\cH_{W|TX}\cup \cL_{W|T}\cup I_W.
   $$
The above arguments apply here as well. The bits $v_i,i\in \cH_{W|TX}$ are set to $0$ or $1$ with probability $1/2$ independently of $x^n,t^n$ and each other, and are made available both to the encoder and the decoder.
The values $v_i, i\in \cI_W$ are assigned randomly according to the distribution
   \begin{equation}\label{eq:random2}
   \Pr(v_i=a)={P}_{V_i|V^{i-1}T^nX^n}(a|v^{i-1},t^n,x^n), \quad a=0,1.
   \end{equation}
The values $v_i, i\in \mathcal{L}_{W|T}$   are assigned as follows:
\begin{equation}\label{eq:func2}
v_i=\arg\max_{a\in\{0,1\}} {P}_{V_i|V^{i-1}T^n}(a|v^{i-1},t^n).
\end{equation}
This concludes the description of the encoder $\phi_2.$

The information bits $u_{\cI_T}$ and $v_{\cI_W}$ are transmitted to the decoder.
In addition, the decoder knows the values of all the frozen bits. The first-layer mapping $\psi_1$ consists of finding the
values $u_{\mathcal{L}_T}$ using rule \eqref{eq:func1}. Upon completing this, the decoder knows all the bits $u^n$ and obtains 
the first-layer reproduction sequence  $t^n=u^n G_n.$ 
To construct the more refined representation of the source sequence $x^n$, the decoder uses $t^n$ to determine $v_{\mathcal{L}_{W|T}}$ from \eqref{eq:func2}. Upon completing this, the decoder can find the second-layer reproduction sequence $w^n=v^nG_n$. 

By construction we clearly obtain the desired values of the rates of the two-layer source codes: 
   $$
   R_1= \frac{|\mathcal{I}_T|} {n} \to R(D_1), \quad R_2= \frac{|\mathcal{I}_W|} {n} \to R(D_2)-R(D_1).
   $$
Turning to the distortion, suppose that the values of the frozen bits $u_{\mathcal{H}_{T|X}}$ and $v_{\mathcal{H}_{W|TX}}$ are fixed.
Then the average values of the distortion for the two representations of the source are given by
     \begin{equation} \label{dist1}
D_{1,n} (u_{\mathcal{H}_{T|X}})={E}_{X^n} \Big[ {E} [d(X^n,w^n(u_{\mathcal{H}_{T|X}},X^n)] \Big]
\end{equation}
\begin{equation} \label{dist2}
{D_{2,n}} (u_{\mathcal{H}_{T|X}},v_{\mathcal{H}_{W|TX}})= {E}_{X^n} \Big[ {E} [d(X^n,v^n(u_{\mathcal{H}_{T|X}},v_{\mathcal{H}_{W|TX}},X^n)] \Big]
\end{equation}
respectively, where the inner expectations in \eqref{dist1} and \eqref{dist2} are taken over randomization in \eqref{eq:random1} and \eqref{eq:random2}.
To show that there exists a choice of the frozen bits for which the values $D_{1n}$ and $D_{2n}$ approach $D_1$ and $D_2,$
we compute the average distortions over all the possible assignments of frozen bits.
\begin{theorem} Let $0 < {\beta}' < \beta < 1/2$. Then
  \begin{equation}\label{eq:ED}
 {E}[D_{1,n} (U_{\mathcal{H}_{T|X}})] \leq D_1+O(2^{-n^{{\beta}'}}), \quad {E}[D_{2,n} (U_{\mathcal{H}_{T|X}},V_{\mathcal{H}_{W|TX}})] \leq D_2+O(2^{-n^{{\beta}'}}).
 \end{equation}
  Consequently, there exists a choice of the frozen bits $u_{\mathcal{H}_{T|X}},v_{\mathcal{H}_{W|TX}}$ such that 
  $D_{1,n} (u_{\mathcal{H}_{T|X}})=D_1+O(2^{-n^{{\beta}'}})$ and $D_{2,n} (u_{\mathcal{H}_{T|X}},v_{\mathcal{H}_{W|TX}})=D_2+O(2^{-n^{{\beta}'}})$.
\end{theorem}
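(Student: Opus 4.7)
The plan is to show that the total variation distance $\|P-Q\|_{TV}=O(2^{-n^{\beta'}})$, where $P$ is the target $n$-fold product distribution of $P_{TWX}$ and $Q$ is the joint distribution on $(X^n,T^n,W^n)$ induced by the randomized encoder with the frozen bits $u_{\cH_{T|X}}, v_{\cH_{W|TX}}$ drawn uniformly and independently at random. Under $P$ the per-letter distortions satisfy $E\,d(X,T)\le D_1$ and $E\,d(X,W)\le D_2$ by the hypotheses of Theorem~\ref{thm:sr}, so, since $d$ is bounded, the TV bound converts immediately into \eqref{eq:ED}. The final assertion then follows by averaging: as the left-hand sides of \eqref{eq:ED} are precisely the expected distortions over uniformly random frozen bits, some realization $u_{\cH_{T|X}}, v_{\cH_{W|TX}}$ must simultaneously achieve both bounds.

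For the first layer I apply the chain rule for total variation,
\[
\|P_{X^nU^n}-Q_{X^nU^n}\|_{TV}\le \sum_{i=1}^n E\bigl[\|P_{U_i|U^{i-1}X^n}-Q_{U_i|U^{i-1}X^n}\|_{TV}\bigr],
\]
noting that the $X^n$-marginals agree, and split the sum across the partition \eqref{eq:part}. For $i\in\cI_T$ the two conditionals coincide by \eqref{eq:random1} and contribute zero. For $i\in\cH_{T|X}$, $Q_{U_i|U^{i-1}X^n}$ is uniform; writing $p=P_{U_i|U^{i-1}X^n}(0|u^{i-1},x^n)$, the identity $2|p-\tfrac12|=\sqrt{1-4p(1-p)}$, Jensen's inequality, and $Z(U_i|U^{i-1},X^n)\ge 1-\delta_n$ yield an average contribution of $O(\sqrt{\delta_n})$. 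For $i\in\cL_T$, $Q$ puts all mass on $u_i^\ast=\arg\max_a P_{U_i|U^{i-1}}(a|u^{i-1})$; the identity $E_{X^n|u^{i-1}}\bigl[P_{U_i|U^{i-1}X^n}(u_i^\ast|u^{i-1},X^n)\bigr]=P_{U_i|U^{i-1}}(u_i^\ast|u^{i-1})$ together with $Z(U_i|U^{i-1})\le\delta_n$ bounds the average contribution by $E[\min(p,1-p)]\le E[\sqrt{p(1-p)}]\le \delta_n/2$. Summing over at most $n$ indices gives $\|P_{X^nU^n}-Q_{X^nU^n}\|_{TV}\le Cn\sqrt{\delta_n}=O(2^{-n^{\beta'}})$ for any $\beta'<\beta$.

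For the second layer I condition on $(X^n,U^n)$ and rerun the same chain-rule argument for $V^n$ using the partition $[n]=\cH_{W|TX}\cup\cL_{W|T}\cup\cI_W$. The three bound types above apply verbatim; the analog of $\cL_T\subseteq\cL_{T|X}$ needed to control the deterministic rule \eqref{eq:func2} is $\cL_{W|T}\subseteq\cL_{W|TX}$, which follows by Cauchy--Schwarz exactly as in \eqref{relation}. A triangle inequality then gives $\|P_{X^nU^nV^n}-Q_{X^nU^nV^n}\|_{TV}=O(2^{-n^{\beta'}})$, and since $G_n$ is a bijection this transfers to $\|P_{X^nT^nW^n}-Q_{X^nT^nW^n}\|_{TV}$; boundedness of $d$ finishes \eqref{eq:ED}.

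The main technical obstacle will be the $\cL$-type indices, where the encoder commits to a deterministic argmax while the target conditional is only near-deterministic: one has to show not just that $P_{U_i|U^{i-1}X^n}$ concentrates, but that it concentrates on the \emph{same} symbol as $\arg\max P_{U_i|U^{i-1}}$. The averaging identity above supplies exactly this. Everything else is bookkeeping: establishing the analogs of \eqref{eq:cardi1}--\eqref{relation} for each index set associated with $V$, and carefully tracking constants through the two-stage chain-rule sum.
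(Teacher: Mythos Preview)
Your proposal is correct and follows essentially the same route as the paper: define the encoder-induced law $Q$, bound $\|P-Q\|_1$ via the telescoping identity \eqref{eq:Abel} (your ``chain rule for TV''), handle the three index classes exactly as you describe using the Bhattacharyya parameters, and then pass to the second layer by a triangle inequality combined with a rerun of the same estimate conditioned on $(T^n,X^n)$. The only cosmetic differences are that the paper works with the $L_1$ norm rather than TV and writes out the $\cH_{T|X}$ and $\cL_T$ estimates in slightly different algebraic form; your tower-property argument for the $\cL_T$ indices is the same as the paper's step~(e), and your Jensen argument for the $\cH_{T|X}$ indices is the paper's steps~(d)--(f).
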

\begin{proof}
In the proofs below, we often omit the subscript random variables in the notations of distributions if the realizations are denoted by lowercase letters that identify them without ambiguity. For example, $P(u_i|u^{i-1},x^n)$ stands for $P_{U_i|U^{i-1}X^n}(u_i|u^{i-1},x^n),$ etc. Denote by ${Q}_{T^nW^nX^n}$ the joint distribution of the reproduction sequences and source sequence and let 
${Q}_{U^nX^n}$ and ${Q}_{V^nT^nX^n}$ be the distributions derived from it. We have
    \begin{align*}
{Q}_{U^nX^n}(u^n,x^n)={P}_{X^n}(x^n) &\Big( \prod_{i \in \mathcal{I}_T} {P}_{U_i|U^{i-1}X^n}(u_i|u^{i-1},x^n) \Big) 
2^{-|\mathcal{H}_{T|X}|}\\ 
&\times \Big( \prod_{i \in \mathcal{L}_T} \mathbbm{1} [{P}_{U_i|U^{i-1}}(u_i|u^{i-1}) > {P}_{U_i|U^{i-1}}(u_i \oplus 1|u^{i-1})] \Big)\\
{Q}_{V^nT^nX^n}(v^n,t^n,x^n)={Q}_{T^nX^n}(t^n,x^n) &\Big( \prod_{i \in \mathcal{I}_W} {P}_{V_i|V^{i-1}T^nX^n}(v_i|v^{i-1},t^n,x^n) 
\Big) 2^{-|\mathcal{H}_{W|TX}|}\\
&\times    \Big( \prod_{i \in \mathcal{L}_{W|T}} \mathbbm{1} [{P}_{V_i|V^{i-1}T^n}(v_i|v^{i-1},t^n) > {P}_{V_i|V^{i-1}T^n}(v_i \oplus 1|v^{i-1},t^n)] \Big).
\end{align*}
The $L_1$ distance between ${P}_{U^nX^n}$ and ${Q}_{U^nX^n}$ can be bounded as follows:
\begin{align*}
&\|{P}_{U^nX^n}-{Q}_{U^nX^n}\|_1 \\
&=\sum_{u^n,x^n}|{P}(u^n,x^n)-{Q}(u^n,x^n)|=\sum_{u^n,x^n}|({P}(u^n|x^n)-{Q}(u^n|x^n)){P}(x^n)| \\
&\overset{(a)}{=}\sum_{u^n,x^n} \Big| \sum_i ({Q}(u_i|u^{i-1},x^n)-{P}(u_i|u^{i-1},x^n)){P}(x^n) \Big( \prod_{j=1}^{i-1} {P}(u_j|u^{j-1},x^n) \Big) \Big( \prod_{j=i+1}^n {Q}(u_j|u^{j-1},x^n) \Big) \Big| \\
& \overset{(b)}{\leq}\sum_{i \in \mathcal{I}_T^c} \sum_{u^n,x^n} |{Q}(u_i|u^{i-1},x^n)-{P}(u_i|u^{i-1},x^n)|{P}(x^n) {P}(u^{i-1}|x^n) {Q}(u_{i+1}^n|u^i,x^n) 
\end{align*}
\begin{equation} \label{proof1}
\begin{aligned}
&=\sum_{i \in \mathcal{L}_T} \sum_{u^i,x^n} \Big| \mathbbm{1} [{P}(u_i|u^{i-1}) > {P}(u_i \oplus 1|u^{i-1})]-{P}(u_i|u^{i-1},x^n) \Big| {P}(u^{i-1},x^n)\\
 &+\sum_{i \in \mathcal{H}_{T|X}} \sum_{u^i,x^n} \Big| \frac{1}{2}-{P}(u_i|u^{i-1},x^n) \Big| {P}(u^{i-1},x^n) \\
&\overset{(c)}{=}\sum_{i \in \mathcal{L}_T} \sum_{u^{i-1},x^n} 2{P}(u_i(u^{i-1}) \oplus 1|u^{i-1},x^n) {P}(u^{i-1},x^n) + \sum_{i \in \mathcal{H}_{T|X}} 2{E} \Big| \frac{1}{2}-{P}(0|U^{i-1},X^n) \Big| \\
&\overset{(d)}{\leq} \sum_{i \in \mathcal{L}_T} \sum_{u^{i-1}} 2{P}(u_i(u^{i-1}) \oplus 1,u^{i-1}) +\sum_{i \in \mathcal{H}_{T|X}}2\sqrt{{E} \Big[ ( \frac{1}{2}-{P}(0|U^{i-1},X^n) )^2 \Big]}\\
&\overset{(e)}{\leq} \sum_{i \in \mathcal{L}_T} \sum_{u^{i-1}} 2 \sqrt{{P}(0,u^{i-1}){P}(1,u^{i-1})}+\sum_{i \in \mathcal{H}_{T|X}}2\sqrt{{E}\Big[\frac{1}{4}-{P}(0|U^{i-1},X^n){P}(1|U^{i-1},X^n)\Big]} \\
&=\sum_{i \in \mathcal{L}_T}Z(U_i|U^{i-1}) \\
&+\sum_{i \in \mathcal{H}_{T|X}}2\sqrt{{E}
\Big[\Big(\frac{1}{2}-\sqrt{{P}(0|U^{i-1},X^n){P}(1|U^{i-1},X^n)}\Big) \Big(\frac{1}{2}+\sqrt{{P}(0|U^{i-1},X^n){P}(1|U^{i-1},X^n)}\Big)\Big]}\\
&\overset{(f)}{\leq}\sum_{i \in \mathcal{L}_T}Z(U_i|U^{i-1})+\sum_{i \in \mathcal{H}_{T|X}}2\sqrt{{E}\Big[\frac{1}{2}-\sqrt{{P}(0|U^{i-1},X^n){P}(1|U^{i-1},X^n)}\Big]}\\
&=\sum_{i \in \mathcal{L}_T}Z(U_i|U^{i-1})+\sum_{i \in \mathcal{H}_{T|X}}2\sqrt{\frac{1}{2}-\frac{1}{2}Z(U_i|U^{i-1},X^n)}\\
&=O(2^{-n^{{\beta}'}}).
\end{aligned}
\end{equation}
Steps (a)-(f) are justified as follows.\\
(a) follows from observing that ${Q}(x^n)={P}(x^n)$ and using the equality \\
\begin{equation} \label{eq:Abel}
\prod_{i=1}^n B_i-\prod_{i=1}^n A_i =\sum_{i=1}^n (B_i-A_i)\Big(\prod_{j=1}^{i-1}A_j \Big)\Big(\prod_{j=i+1}^{n}B_j \Big)
\end{equation}
(\cite{kor09a}, Lemma 3.5);\\
(b) The triangle inequality and ${Q}(u_i|u^{i-1},x^n)={P}(u_i|u^{i-1},x^n)$ for $i \in \mathcal{I}_T$; \\
(c) Definition of $u_i(u^{i-1})$ in \eqref{eq:func1}; \\
(d) The Cauchy-Schwarz inequality; \\
(e) By definition of  $u_i(u^{i-1})$ in \eqref{eq:func1} we have ${P}(u_i(u^{i-1}) \oplus 1,u^{i-1}) \leq {P}(u_i(u^{i-1}) ,u^{i-1}),$
and 
   $$
   {P}(u_i(u^{i-1}) \oplus 1,u^{i-1}) {P}(u_i(u^{i-1}) ,u^{i-1})={P}(0,u^{i-1}) {P}(1,u^{i-1});
   $$
(f) Since ${P}(0|U^{i-1},X^n)+{P}(1|U^{i-1},X^n)=1$, we have  $\sqrt{{P}(0|U^{i-1},X^n){P}(1|U^{i-1},X^n)} \leq 1/2$. 

\vspace*{.1in}
Similarly, the $L_1$ distance between ${P}_{W^nT^nX^n}$ and ${Q}_{W^nT^nX^n}$ can be bounded as
\begin{align*}
\|{P}_{W^nT^nX^n}-{Q}_{W^nT^nX^n}\|_1 &=\|{P}_{V^nT^nX^n}-{Q}_{V^nT^nX^n}\|_1 \\ 
&=\sum_{v^n,t^n,x^n}|{P}(v^n|t^n,x^n){P}(t^n,x^n)-{Q}(v^n|t^n,x^n){P}(t^n,x^n)\\
&\hspace*{.5in}+{Q}(v^n|t^n,x^n){P}(t^n,x^n)-{Q}(v^n|t^n,x^n){Q}(t^n,x^n)| \\
&\leq \sum_{v^n,t^n,x^n}|({P}(v^n|t^n,x^n)-{Q}(v^n|t^n,x^n)){P}(t^n,x^n)| \\
&\hspace*{.5in}+\sum_{v^n,t^n,x^n}\Big({Q}(v^n|t^n,x^n)|{Q}(t^n,x^n)-{P}(t^n,x^n)| \Big) \\
&\overset{(g)}{\leq}O(2^{-n^{{\beta}'}})+||{P}_{T^nX^n}-{Q}_{T^nX^n}||_1 \\
&\overset{(h)}{=}O(2^{-n^{{\beta}'}})+||{P}_{U^nX^n}-{Q}_{U^nX^n}||_1 =O(2^{-n^{{\beta}'}})
\end{align*}
where (g) is obtained in the same way as \eqref{proof1} and (h) follows from the fact that the mapping between $U^n$ and $W^n$ is bijective.
Therefore, we obtain
\begin{align*}
{E}[D_{1,n} (U_{\mathcal{H}_{T|X}})]&=E_{Q}[d(X^n,T^n)] \\
&\leq E_P[d(X^n,T^n)] + (\max d(t,x))\|{P}_{X^nT^n}-{Q}_{X^nT^n}\|_1 \\
&\leq D_1+O(2^{-n^{{\beta}'}})\\
{E}[D_{2,n} (U_{\mathcal{H}_{T|X}},V_{\mathcal{H}_{W|TX}})]&={E}_{{Q}}[d(W^n,X^n)] \\
&\leq {E}_{{P}}[d(X^n,W^n)] + (\max d(w,x))\|{P}_{W^nT^nX^n}-{Q}_{W^nT^nX^n}\|_1\\
& \leq D_2+O(2^{-n^{{\beta}'}}).
\end{align*}
This concludes the proof.
\end{proof}
{\em Remark 1:} { 
We can extend our result to a slightly more general case.  Namely, suppose that the source is not be successively refinable and in particular, the Markov condition \eqref{eq:Markov} is not satisfied. In this case the achievable rate pairs are characterized by the following result of Rimoldi \cite{Rimoldi94}  (see also Koshelev  \cite{Koshelev80}). The rate pair $(R_1,R_2)$ is achievable with distortions $D_1,D_2$ if and only if there exists a conditional distribution 
$P_{TW|X}$ such that the following four inequalities are satisfied
\begin{equation}\label{gencond}
R_1 \geq I(X;T), \quad E_{XT}[d(X,T)] \leq D_1, \quad R_2 \geq I(X;WT), \quad E_{XW}[d(X,W)] \leq D_2.
\end{equation}
We note that the scheme described in this section implies that the polar code consruction achieves these
rate values for given distortion levels $D_1,D_2.$
}

\vspace*{.1in} {\em Remark 2:}
{The concept of successive refinement can be extended to a multilevel representation of the source $X$ in a natural way; see \cite{Koshelev80,Koshelev81}. Roughly speaking, given distortions $D_1 \geq D_2 \geq ... \geq D_t$, if  
the rate region $(R(D_1),R(D_2),...,R(D_t))$ is achievable, then the source is said to be successively refinable at $t\ge 2$ levels.
It is easy to see that the source is successively refinable at $t$ levels if and only if 
it is successively refinable between any two consecutive levels. As a result, the region of achievable rates for the $t$-step refinement of $X$ can be achieved by consecutively using the coding scheme presented above.

\vspace*{.1in} {\em Remark 3:}
As noted in \cite{Equitz91}, the successive refinement problem is a particular case of the problem of multiple descriptions of the source. The region of achievable rates of the multiple description problem was established in
\cite{Ahlswede85,ElGamal91}. Our considerations can be easily extended to this case, giving an explicit construction of codes
attaining the rate region of multiple descriptions. We note that a recent preprint \cite{Sahebi14} also discusses an approach to achieving this rate region using polar codes relying on partitions of the form \eqref{eq:part}.

\section{Successive Refinement for the Wyner-Ziv problem}\label{sect:WZ}
The Wyner-Ziv version of the distributed source coding problem \cite{Wyner76} assumes that the decoder
is provided with the side information in the form of a random variable $Z$ that is correlated
with the source $X$. The correlation is expressed through a joint distribution $P_{XZ}$ known to both 
the encoder and the decoder. And the encoder is not aware of the realization of $Z$. As shown in \cite{Wyner76}, the source sequence $X^n$ can be reproduced with
distortion $D$ if the number of messages used to represent the source is of order $\exp(n R_{X|Z}(D)),$ where
   \begin{equation} \label{eq:WZ}
R_{X|Z}(D)=\min I(X;T|Z)
     \end{equation}
and the minimization is over all random variables $T$ such that 
$T \rightarrow X \rightarrow Z$ is a Markov chain and such that there exists a 
function $f$ acting on  $(T, Z)$ that satisfies 
\begin{equation}
E_{XTZ}d(X,f(T,Z)) \leq D.
\end{equation}

Korada \cite{kor09a} suggested a polar coding scheme that attains the rate \eqref{eq:WZ} under the
assumption that the side information has the form $Z=X+\theta,$ where $\theta$ is a Bernoulli random variable.
In the first part of this section we observe that the ideas developed above enable one to design a
constructive scheme for the general version of the Wyner-Ziv problem.
\subsection{Polar codes for distributed source coding}
Let $(T,X,Z)$ be random variables that achieve the minimum in \eqref{eq:WZ} . As usual, let $(T^n,X^n,Z^n)$ denote $n$ independent copies of $(T,X,Z)$. 
Define $U^n=T^nG_n$ and denote the joint distribution of $(U^n,T^n,X^n,Z^n)$ by 
${P}_{U^nT^nX^nZ^n}$. 
Define the index subsets $\mathcal{H}_{T|Z}$, $\mathcal{L}_{T|Z}$, $\mathcal{H}_{T|XZ}$, $\mathcal{L}_{T|XZ}$ 
in the same way as in \eqref{def2}. 
Let us partition the set of indices as follows: $[n]=\mathcal{H}_{T|XZ}\cup \mathcal{L}_{T|Z}\cup \mathcal{I}_T,$ where $\mathcal{I}_T\triangleq(\mathcal{L}_{T|Z} \cup \mathcal{H}_{T|XZ})^c$ is the subset of
indices that carry the information. Observe that
\begin{equation}\label{eq:rate}
\lim_{n \to \infty} \frac{1}{n}|\mathcal{I}_T|=R_{X|Z}(D).
\end{equation}
We proceed analogously to \eqref{eq:random1}-\eqref{eq:func1}, constructing the sequence $u_i$ that communicates the information. For $i\in \mathcal{H}_{T|XZ}$ assign the bit values $u_i=0$ or $1$ with probability $1/2$ independently of $x^n, z^n$ and each other and make them available both to the encoder and decoder.
The remaining bits $u_i,i\in \mathcal{H}_{T|XZ}^c$ are assigned successively as follows. If $i\in \cI_T,$ choose
$u_i$ in a randomized way according to the distribution
         \begin{equation} \label{random3}
        \Pr(u_i=a)=P_{U_i|U^{i-1}Z}(a|u^{i-1},x_n), \quad a=0,1
        \end{equation}
and if $i \in \mathcal{L}_{T|Z},$ put
\begin{equation} \label{func3}
u_i=u_i(u^{i-1},x^n)\triangleq \underset{a \in \{0,1\}}{\arg\max \text{ }} {P}_{U_i|U^{i-1}X^n}(a|u^{i-1},x^n).
\end{equation}
The decoder is provided with the sequence $u_{\mathcal{I}_T}$ and constructs
an estimate of the bits $u_{\mathcal{L}_{T|Z}}$ successively by setting 
\begin{equation} \label{func4}
\hat u_i=\hat u_i(u^{i-1},z^n)\triangleq\underset{a \in \{0,1\}} {\mathrm{argmax} \text{ }} {P}_{U_i|U^{i-1}Z^n}(a|u^{i-1},z^n).
\end{equation}
Then the decoder calculates $t^n=\hat u^nG_n$ and outputs the reproduction sequence $s^n=(f(t_1,z_1),f(t_2,z_2),...,f(t_n,z_n))$. By assumptions, the communication rate 
approaches $R_{X|Z}(D)$ \eqref{eq:rate}. 
For a certain choice of the frozen bits $u_{\mathcal{H}_{T|XZ}}$, the average distortion is given by
\begin{equation*}
D_n (u_{\mathcal{H}_{T|XZ}})
=E_{X^nZ^n} \Big[ {E} [d(X^n,s^n(u_{\mathcal{H}_{T|XZ}},X^n,Z^n))] \Big]
\end{equation*}
where the inner expectation is computed over the randomized choice of the bits in $\cI_T$ via \eqref{random3}.

Next we show that for some choice of the frozen bits $u_{\cH_{T|XZ}}$ this construction attains the desired distortion level. For this we must show that the sequences $u^n$ and $\hat u^n$ coincide with high probability. 
This is not obvious because the encoder has no access to the side information $Z^n$ 
while the decoder has no access to the source $X^n.$
\begin{theorem} 
For any $0 < {\beta}' < \beta < 1/2$ 
   $$
   E[D_n (U_{\mathcal{H}_{T|XZ}})] \leq D+O(2^{-n^{{\beta}'}})
   $$
   where the expectation is computed over the choice of the frozen bits $u_{\mathcal{H}_{T|XZ}}.$
    Consequently, there exists a choice such that $D_n (u_{\mathcal{H}_{T|XZ}}) \leq D+O(2^{-n^{{\beta}'}})$.
\end{theorem}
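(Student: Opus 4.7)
The plan is to mirror the proof of the previous theorem, with one genuinely new ingredient: because the encoder only sees $x^n$ while the decoder only sees $z^n$, one must separately show that the two argmax rules used for $i\in\cL_{T|Z}$ agree with high probability, so that the encoder's sequence $u^n$ and the decoder's reconstruction $\hat u^n$ coincide except on a negligible event. I therefore split the bound into an $L_1$ comparison of the encoder-induced distribution with the target distribution, plus a separate estimate of the probability of encoder/decoder disagreement. The Markov chain $T\to X\to Z$ is the key structural ingredient tying the two pieces together.

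First I would define the encoder distribution $Q_{U^nX^nZ^n}$ by setting $Q(x^n,z^n)=P(x^n,z^n)$ and multiplying by the encoder rule: uniform $1/2$ on $\cH_{T|XZ}$, sampling from $P(u_i|u^{i-1},x^n)$ on $\cI_T$, and the argmax of $P(\cdot|u^{i-1},x^n)$ on $\cL_{T|Z}$. Applying Ar{\i}kan's telescoping identity \eqref{eq:Abel}, the triangle inequality, and Cauchy--Schwarz exactly as in steps (a)--(f) of the previous proof, the $L_1$ distance is bounded by
$$
\|P-Q\|_1 \le \sum_{i\in\cL_{T|Z}} Z(U_i|U^{i-1},X^n) + \sum_{i\in\cH_{T|XZ}} 2\sqrt{\tfrac12-\tfrac12 Z(U_i|U^{i-1},X^n,Z^n)}.
$$
The Markov property gives $Z(U_i|U^{i-1},X^n,Z^n)=Z(U_i|U^{i-1},X^n)$, and \eqref{relation} then yields $Z(U_i|U^{i-1},X^n)\le Z(U_i|U^{i-1},Z^n)\le\delta_n$ for $i\in\cL_{T|Z}$, while $Z(U_i|U^{i-1},X^n,Z^n)\ge 1-\delta_n$ for $i\in\cH_{T|XZ}$. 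Hence $\|P-Q\|_1=O(2^{-n^{\beta'}})$.

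Next I would bound $\Pr_Q[\hat u^n\neq u^n]$. Since $\hat u_i=u_i$ on $\cI_T\cup\cH_{T|XZ}$, an easy induction shows that $\hat u^n\neq u^n$ forces $G_i^c$ for some $i\in\cL_{T|Z}$, where $G_i=\{\arg\max_a P(a|u^{i-1},x^n)=\arg\max_a P(a|u^{i-1},z^n)\}$. Under $P$, each of the two argmaxes predicts the true $U_i$ with error probability at most $Z(U_i|U^{i-1},X^n)\le\delta_n$ and $Z(U_i|U^{i-1},Z^n)\le\delta_n$ respectively, so $\Pr_P[G_i^c]\le 2\delta_n$. Because $G_i$ is an event on $(U^{i-1},X^n,Z^n)$, the previous step gives $\Pr_Q[G_i^c]\le\Pr_P[G_i^c]+\|P-Q\|_1$, and a union bound yields $\Pr_Q[\hat u^n\neq u^n]=O(2^{-n^{\beta'}})$ after absorbing the factor $n$ by slightly shrinking $\beta'$.

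To conclude, on $\{\hat u^n=u^n\}$ the decoder reconstructs $s^n=(f(t_i,z_i))_i$ with $t^n=u^nG_n$, so
$$
E_Q[d(X^n,s^n)] \le E_Q[d(X^n,f^n(T^n,Z^n))] + (\max d)\,\Pr_Q[\hat u^n\neq u^n],
$$
and a further application of the $L_1$ bound to the first term gives $E_Q[d(X^n,f^n(T^n,Z^n))]\le D+O(2^{-n^{\beta'}})$; averaging over the frozen bits then extracts a deterministic choice attaining the same bound. The main obstacle is the agreement estimate, since $L_1$ closeness alone does not control the probability that the encoder's $x^n$-based argmax differs from the decoder's $z^n$-based one: it is the Markov chain $T\to X\to Z$ that forces both argmax rules to track the same latent bit $U_i$ with error no worse than the $\delta_n$ threshold defining $\cL_{T|Z}$.
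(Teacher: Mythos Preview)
Your proposal is correct and tracks the paper's proof closely. The paper organizes the argument purely in terms of $L_1$ distances between three joint laws $P$, $Q$, $\hat Q$ (with $Q$ defined using the \emph{decoder's} argmax on $\cL_{T|Z}$ as the intermediate), whereas you take $Q$ to be the \emph{encoder's} law and then isolate the encoder/decoder agreement as a separate probability estimate $\Pr_Q[\hat u^n\neq u^n]$; but the substantive steps are identical. In both versions the two controlling quantities are the same Bhattacharyya sums $\sum_{\cL_{T|Z}}Z(U_i|U^{i-1},X^n)$ and $\sum_{\cH_{T|XZ}}\sqrt{1-Z(U_i|U^{i-1},X^n,Z^n)}$, and the Markov chain $T\to X\to Z$ is invoked for exactly the same purpose, namely to convert $Z(U_i|U^{i-1},X^n,Z^n)$ into $Z(U_i|U^{i-1},X^n)$ so that $\cL_{T|Z}\subseteq\cL_{T|X}$. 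Your first-disagreement induction on $\cL_{T|Z}$ is precisely what the paper formalizes through the sets $\cA_{k_i}$ and $\tilde\cA_{k_i}$.
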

\begin{proof}
Define a function $U^i:{\{0,1\}}^{|\mathcal{L}_{T|Z}^c \cap [i]|} \times {\mathcal{X}}^n \to {\{0,1\}}^i$ as follows: it computes the values $u_j,j\in \mathcal{L}_{T|Z}^c \cap [i]$ successively
using the rule \eqref{func3}. 
Define the function $\hat{U}^i:{\{0,1\}}^{|\mathcal{L}_{T|Z}^c \cap [i]|} \times {\mathcal{Z}}^n \to {\{0,1\}}^i$ in a similar way: it asignes the values $\hat u_j, j\in \mathcal{L}_{T|Z} \cap [i]$ successively using \eqref{func4}.

The proof relies on establishing the proximity of distributions of the sequences $u^n$ and $\hat u^n$ available
to the encoder and the decoder, respectively. 
Define the distribution ${\hat Q}_{U^nX^nZ^n}$ as follows:
  $$ 
{\hat Q}_{U^nX^nZ^n}(u^n,x^n,z^n)={P}_{X^nZ^n}(x^n,z^n) \Big(\prod_{i=1}^n {\hat Q}_{U_i|U^{i-1}X^nZ^n}(u_i|u^{i-1},x^n,z^n) \Big)
   $$
   where
   \begin{align*}
{\hat Q}_{U_i|U^{i-1}X^nZ^n}(u_i|u^{i-1},x^n,z^n) 
= \begin{cases} \frac{1}{2} & \text{if } i \in \mathcal{H}_{T|XZ} \\ \mathbbm{1}[{P}_{U_i|U^{i-1}Z^n}(u_i|u^{i-1},z^n) > {P}_{U_i|U^{i-1}Z^n}(u_i \oplus 1|u^{i-1},z^n)] & \text{if } i \in \mathcal{L}_{T|Z} \\ {P}_{U_i|U^{i-1} X^n}(u_i|U^{i-1}(u_{\mathcal{L}_{T|Z}^c \cap [i-1]},x^n),x^n) & \text{if } i \in \mathcal{I}_T. \end{cases}
\end{align*}
The theorem will follow if we show that
\begin{equation} \label{proof2}
||\hat Q_{U^nX^nZ^n}-{P}_{U^nX^nZ^n}||_1 \leq O(2^{-n^{{\beta}'}}).
\end{equation}
Define another joint distribution ${Q}_{U^nX^nZ^n}$ as follows
    \begin{equation}\label{eq:aj}
{Q}_{U^nX^nZ^n}(u^n,x^n,z^n)={P}_{X^nZ^n}(x^n,z^n) \Big(\prod_{i=1}^n {Q}_{U_i|U^{i-1}X^nZ^n}(u_i|u^{i-1},x^n,z^n) \Big)
   \end{equation}
where
   \begin{align*}
{Q}_{U_i|U^{i-1}X^nZ^n}(u_i|u^{i-1},x^n,z^n) 
= \begin{cases} \frac{1}{2} & \text{if } i \in \mathcal{H}_{T|XZ} \\ \mathbbm{1}[{P}_{U_i|U^{i-1}Z^n}(u_i|u^{i-1},z^n) > {P}_{U_i|U^{i-1}Z^n}(u_i \oplus 1|u^{i-1},z^n)] & \text{if } i \in \mathcal{L}_{T|Z} \\ {P}_{U_i|U^{i-1} X^n}(u_i|u^{i-1},x^n) & \text{if } i \in \mathcal{I}_T. \end{cases}
\end{align*}
It can be easily verified that we can also use
\begin{equation} \label{alterdef}
{Q}_{U_i|U^{i-1}X^nZ^n}(u_i|u^{i-1},x^n,z^n)={P}_{U_i|U^{i-1} X^n}(u_i|\hat{U}^{i-1}(u_{\mathcal{L}_{T|Z}^c \cap [i-1]},z^n),x^n) \text{ if }  i \in \mathcal{I}_T 
\end{equation}
and get the same distribution ${Q}_{U^nX^nZ^n}$ as in \eqref{eq:aj}. The $L_1$ distance between ${Q}_{U^nX^nZ^n}$ and ${P}_{U^nX^nZ^n}$ can be bounded in exactly the same way as in \eqref{proof1}, and we obtain
\begin{equation}
||{Q}_{U^nX^nZ^n}-{P}_{U^nX^nZ^n}||_1 \leq O(2^{-n^{{\beta}'}}).
\end{equation}

Let $\mathcal{L}_{T|Z}=\{k_1,...,k_{|\mathcal{L}_{T|Z}|}\}$, where $k_1<...<k_{|\mathcal{L}_{T|Z}|}$. Define the sets ${\{\mathcal{A}_{k_i}\}}_{i=1}^{|\mathcal{L}_{T|Z}|}$ and ${\{\tilde{\mathcal{A}}_{k_i}\}}_{i=1}^{|\mathcal{L}_{T|Z}|}$ as follows:
    \begin{align*}
\mathcal{A}_{k_i} =\{(u^{k_i-1},x^n,z^n)| & \hat{U}^j(u_{\mathcal{L}_{T|Z}^c \cap [j-1]},z^n)=U^j(u_{\mathcal{L}_{T|Z}^c \cap [j-1]},x^n) \text{ for all } j<k_i \\
&\text{and }  \hat{U}^{k_i}(u_{\mathcal{L}_{T|Z}^c \cap [k_i-1]},z^n) \neq U^{k_i}(u_{\mathcal{L}_{T|Z}^c \cap [k_i-1]},x^n) \}
    \end{align*}
    $$
\tilde{\mathcal{A}}_{k_i}=\{(u^n,x^n,z^n)|(u^{k_i-1},x^n,z^n) \in \mathcal{A}_{k_i} \}.
    $$
By definition  the sets ${\{\tilde{\mathcal{A}}_{k_i}\}}_{i=1}^{|\mathcal{L}_{T|Z}|}$ for different $i$ are pairwise disjoint. 
If $(u^n,x^n,z^n) \in \tilde{\mathcal{A}}_{k_j}$, then 
  $$
  {Q}_{U_i|U^{i-1}X^nZ^n}(u_i|u^{i-1},x^n,z^n)=\hat Q_{U_i|U^{i-1}X^nZ^n}(u_i|u^{i-1},x^n,z^n)
  $$
   for all $i<k_j,$ where ${Q}_{U_i|U^{i-1}X^nZ^n}$ is given by \eqref{alterdef}. 
   Further, if $(u^n,x^n,z^n) \in ( \cup_{i=1}^{|\mathcal{L}_{T|Z}|}\tilde{\mathcal{A}}_{k_i})^c$, then ${Q}_{U^nX^nZ^n}=\hat Q_{U^nX^nZ^n}$. This enables us to bound the $L_1$ distance between the distributions $Q_{U^nX^nZ^n}$ and $\hat Q_{U^nX^nZ^n}$ as follows:
\begin{equation}
\begin{aligned}
\|Q_{U^nX^nZ^n}-&\hat Q_{U^nX^nZ^n}\|_1 \\
&= \sum_{i=1}^{|\mathcal{L}_{T|Z}|} \Big( \sum_{(u^n,x^n,z^n) \in \tilde{\mathcal{A}}_{k_i}}|{Q}(u^n,x^n,z^n)-\hat Q(u^n,x^n,z^n)| \Big) \\
&= \sum_{i=1}^{|\mathcal{L}_{T|Z}|} \biggl( \sum_{(u^n,x^n,z^n) \in \tilde{\mathcal{A}}_{k_i}} \Big( {Q}(u^{k_i-1},x^n,z^n) \Big| {Q}(u_{k_i}^n|u^{k_i-1},x^n,z^n)-\hat Q(u_{k_i}^n|u^{k_i-1},x^n,z^n) \Big| \Big) \biggr) \\
&=\sum_{i=1}^{|\mathcal{L}_{T|Z}|} \biggl( \sum_{(u^{k_i-1},x^n,z^n) \in \mathcal{A}_{k_i}} \Big( {Q}(u^{k_i-1},x^n,z^n) \| {Q}_{U_{k_i}^n|u^{k_i-1},x^n,z^n}-\hat Q_{U_{k_i}^n|u^{k_i-1},x^n,z^n} \|_1 \Big) \biggr) \\
& \overset{(a)}{\leq} 2 \sum_{i=1}^{|\mathcal{L}_{T|Z}|} 
\biggl( \sum_{(u^{k_i-1},x^n,z^n) \in \mathcal{A}_{k_i}} {Q}(u^{k_i-1},x^n,z^n) \biggr)  \\
& \leq 2 \sum_{i=1}^{|\mathcal{L}_{T|Z}|} \Big(  \sum_{(u^{k_i-1},x^n,z^n) \in \mathcal{A}_{k_i}} {P}(u^{k_i-1},x^n,z^n)  + ||{Q}_{U^n,X^n,Z^n}-{P}_{U^n,X^n,Z^n}||_1 \Big) \\
&\overset{(b)}{=} 2 \sum_{i=1}^{|\mathcal{L}_{T|Z}|} \sum_{(u^{k_i-1},x^n,z^n) \in \mathcal{A}_{k_i}}  
\Big( {P}_{U_{k_i}U^{k_i-1}X^nZ^n}(u_{k_i}(u^{k_i-1},x^n) \oplus 1,u^{k_i-1},x^n,z^n)   \\
&  + {P}_{U_{k_i}U^{k_i-1}X^nZ^n}(\hat{u}_{k_i}(u^{k_i-1},z^n) \oplus 1,u^{k_i-1},x^n,z^n) \Big) + 2|\mathcal{L}_{T|Z}| \|{Q}_{U^nX^nZ^n}-{P}_{U^nX^nZ^n}\|_1 \\
& \leq 2 \sum_{i=1}^{|\mathcal{L}_{T|Z}|} \Big( \sum_{u^{k_i-1},x^n}{P}_{U_{k_i}U^{k_i-1}X^n}(u_{k_i}(u^{k_i-1},x^n) \oplus 1,u^{k_i-1},x^n)  \\
&  + \sum_{u^{k_i-1},z^n} {P}_{U_{k_i}U^{k_i-1}Z^n}(\hat{u}_{k_i}(u^{k_i-1},z^n) \oplus 1,u^{k_i-1},z^n) 
\Big) + O(2^{-n^{{\beta}'}}) \\
& \leq \sum_{i=1}^{|\mathcal{L}_{T|Z}|} \Big( Z(U_{k_i}|U^{k_i-1},X^n)+Z(U_{k_i}|U^{k_i-1},Z^n) \Big) + O(2^{-n^{{\beta}'}}) \\
& \overset{(c)}{\leq} O(2^{-n^{{\beta}'}})
\end{aligned}
\end{equation}
where the steps (a)-(c) are justified as follows.\\
(a) The $L_1$ distance between two distributions is always upper bounded by 2; \\
(b) By definition, if $(u^{k_i-1},x^n,z^n) \in \mathcal{A}_{k_i}$ then 
$u_{k_i}\ne \hat u_{k_i},$ and so $\{u_{k_i}\oplus 1,\hat u_{k_i}\oplus 1\}=\{0,1\};$\\
(c) The condition $T \rightarrow X \rightarrow Z$ implies $U^n \rightarrow X^n \rightarrow Z^n$. Thus $Z(U_{k_i}|U^{k_i-1},X^n)=Z(U_{k_i}|U^{k_i-1},X^n,Z^n)$. 

To complete the proof, use the triangle inequality
\begin{align*}
\|\hat Q_{U^nX^nZ^n}-{P}_{U^nX^nZ^n}\|_1 &\leq \|\hat Q_{U^nX^nZ^n}-{Q}_{U^nX^nZ^n}\|_1 +\|{Q}_{U^nX^nZ^n}-{P}_{U^nX^nZ^n}\|_1\\
  &\leq O(2^{-n^{{\beta}'}}).
\end{align*}
\end{proof}

\subsection{Successive refinement for the Wyner-Ziv problem} 
In this section we extend the ideas of the construction of the previous section to the case of distributed
successive refinement. We begin with the following definition.
\begin{definition}\label{def:scsi} {\rm \cite{Steinberg04}} Let $X$ be a discrete memoryless source, $Z$ and $Y$ be the side information available to the decoders at the coarse and the refinement stages, respectively.
Suppose that there exist encoding functions
  \begin{align} \label{eq:sr4}
    &\phi_1:\cX^n\to [M_1]\\
   & \phi_2:\cX^n\to[M_2]\label{eq:sr4-1}
    \end{align}
    and decoding functions
 \begin{align}\label{eq:sr5}
   &\psi_1:[M_1]\times \cZ^n\to \cT^n\\
   &\psi_2:[M_1]\times[M_2] \times \cY^n\to \cT^n\label{eq:sr5-1}
   \end{align} 
such that 
  \begin{align}\label{eq:sr6}
   &E_{X^nZ^n}d(X^n,\psi_1(\phi_1(X^n),Z^n))\le D_1\\
   &E_{X^nY^n}d(X^n,\psi_2(\phi_1(X^n),\phi_2(X^n),Y^n))\le D_2. \label{eq:sr6-1}
  \end{align}
Let  $M_1=2^{nR_1},M_2=2^{n(R_2-R_1)},$ where $(R_1,R_2)$ are the rate values of the encoders.
We say that the rate pair $(R_1,R_2)$ is achievable with distortions $D_1,D_2$ if for any 
$\epsilon_1>0,\epsilon_2>0,\delta>0$ there exists a sufficiently large $n$ such that there exists a coding scheme \eqref{eq:sr4}-\eqref{eq:sr6-1} with block length $n$, rates not exceeding $R_1+\epsilon_1,R_2+\epsilon_1+\epsilon_2$
and distortions $D_1+\delta,D_2+\delta.$

The source $X$ is said to be {\em successively refinable} with distortions $D_1$ and $D_2,$
$D_2\le D_1,$ if the rate pair $(R_{X|Z}(D_1),$ $R_{X|Y}(D_2))$ is achievable.
\end{definition}
As before, the realizations of side informations are only available to decoders, while the joint distribution of $(X,Y,Z)$ is known to both encoder and decoders. For the case where the Markov relation $X \rightarrow Y \rightarrow Z$ holds, Steinberg and Merhav \cite{Steinberg04} gave the following  necessary and sufficient condition 
for successive refinability.
\begin{theorem}\label{thm:SRWZ}
A source $X$ with degraded side information $(Y,Z)$ is successively refinable from $D_1$ to $D_2$ if and only if there exist a pair of random variables $(T,W)$ and a pair of deterministic maps $f_1:\mathcal{T} \times \mathcal{Z} \to \hat{\mathcal{X}}$ and $f_2:\mathcal{W} \times \mathcal{Y} \to \hat{\mathcal{X}}$ such that the following conditions simultaneously hold:\\
1) $R_{X|Z}(D_1)=I(X;T|Z)$ and $E_{XTZ}d(X,f_1(T,Z))\leq D_1$; \\
2) $R_{X|Y}(D_2)=I(X;W|Y)$ and $E_{XWY}d(X,f_2(W,Y))\leq D_2$; \\
3) $(T,W) \rightarrow X \rightarrow Y \rightarrow Z$ form a Markov chain; \\
4) $T \rightarrow (W,Y) \rightarrow X$ form a Markov chain; \\
5) $I(T;Y|Z)=0$.
\end{theorem}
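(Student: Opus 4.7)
The converse is due to \cite{Steinberg04}, so the plan is to establish achievability by constructing a two-layer polar coding scheme extending the Wyner-Ziv construction of Section~\ref{sect:WZ}. Fix a joint distribution and deterministic maps $f_1,f_2$ satisfying conditions~1--5, draw $n$ i.i.d.\ copies $(T^n,W^n,X^n,Y^n,Z^n)$, and form the polar transforms $U^n=T^nG_n$ and $V^n=W^nG_n$.

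For the first layer I would apply the Wyner-Ziv polar scheme of Section~\ref{sect:WZ} verbatim to the triple $(T,X,Z)$: partition $[n]=\mathcal{H}_{T|XZ}\cup\mathcal{L}_{T|Z}\cup\mathcal{I}_T^{(1)}$, freeze $u_{\mathcal{H}_{T|XZ}}$, assign $u_{\mathcal{L}_{T|Z}}$ by \eqref{func3}, and sample $u_{\mathcal{I}_T^{(1)}}$ according to \eqref{random3}; the encoder transmits $u_{\mathcal{I}_T^{(1)}}$, a message of length $\approx nR_{X|Z}(D_1)$. Decoder~1 uses $Z^n$ and \eqref{func4} to recover $\hat u^n$, forms $\hat t^n=\hat u^nG_n$, and outputs $(f_1(\hat t_i,Z_i))_{i=1}^n$, inheriting the distortion guarantee from the previous theorem.

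For the second layer, encoder~2 has $x^n$ and the already-constructed $t^n$. The plan is to partition $[n]=\mathcal{H}_{W|XTY}\cup\mathcal{L}_{W|TY}\cup\mathcal{I}_W^{(2)}$, freeze $v_{\mathcal{H}_{W|XTY}}$, assign $v_{\mathcal{L}_{W|TY}}$ by the $(t^n,y^n)$-based maximum-likelihood rule, sample $v_{\mathcal{I}_W^{(2)}}$ from $P_{V_i|V^{i-1}T^nX^n}$, and transmit $v_{\mathcal{I}_W^{(2)}}$. Source polarization gives $|\mathcal{I}_W^{(2)}|/n\to I(X;W|TY)$. The chain $(T,W)\to X\to Y\to Z$ implies $T\perp Z\mid Y$, which together with condition~5 ($T\perp Y\mid Z$) yields $H(T|Y)=H(T|Z)$ and hence $I(X;T|Y)=I(X;T|Z)$; condition~4 then gives $I(X;W|TY)=I(X;W|Y)-I(X;T|Y)$, so the cumulative rate is $R_{X|Z}(D_1)+[R_{X|Y}(D_2)-R_{X|Z}(D_1)]=R_{X|Y}(D_2)$. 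Decoder~2, equipped with $y^n$ and both messages, first reconstructs $\hat u^n$ using the $Y$-based rule on $\mathcal{L}_{T|Y}\supseteq\mathcal{L}_{T|Z}$ (the inclusion follows from monotonicity of Bhattacharyya parameters under data processing along $T\to Y\to Z$), sets $\hat t^n=\hat u^nG_n$, then reconstructs $\hat v^n$ via the $(\hat t^n,y^n)$-based rule, and outputs $(f_2(\hat w_i,Y_i))_{i=1}^n$ where $\hat w^n=\hat v^nG_n$.

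The distortion analysis will follow the template of the previous theorem: introduce an encoder-side joint distribution $Q$ and a decoder-side distribution $\hat Q$ on $(U^n,V^n,X^n,Y^n,Z^n)$, bound $\|P-Q\|_1$ and $\|Q-\hat Q\|_1$ by $O(2^{-n^{\beta'}})$ via the telescoping identity \eqref{eq:Abel}, the triangle and Cauchy-Schwarz inequalities, and the sub-exponential bound on Bhattacharyya parameters, and then pass to the expected distortions by averaging over the frozen bits. The main obstacle will be ensuring that decoder~2 recovers $u^n$ from $y^n$ alone despite $\mathcal{L}_{T|Z}$ being tailored to $Z^n$; this is precisely where condition~5 combines with the degradedness $X\to Y\to Z$ to deliver $\mathcal{L}_{T|Z}\subseteq\mathcal{L}_{T|Y}$. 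A secondary difficulty is the rate identity above, which closes only when all three Markov conditions~3--5 are invoked together.
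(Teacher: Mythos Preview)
The paper does not prove this theorem; it is quoted from \cite{Steinberg04} as a known characterization, and the paper's own contribution is the subsequent polar construction together with the distortion theorem showing that this construction attains $(D_1,D_2)$. Your plan is therefore not a proof of Theorem~\ref{thm:SRWZ} per se, but an alternative achievability argument that essentially coincides with the paper's constructive goal.

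At that constructive level your scheme differs from the paper's in one structural respect. Immediately after stating Theorem~\ref{thm:SRWZ}, the paper observes that conditions~3) and~5) together force $T$ to be \emph{independent} of $(Y,Z)$: condition~3) gives $T\to Y\to Z$, condition~5) gives $T\to Z\to Y$, and combining these yields $P_{T|YZ}(t|y,z)$ constant in $(y,z)$. Consequently $I(X;T|Z)=I(X;T)$, $\mathcal{L}_{T|Z}=\mathcal{L}_T$, $\mathcal{H}_{T|XZ}=\mathcal{H}_{T|X}$, and the paper's first layer collapses to the plain lossy-compression scheme of Section~\ref{sect:SR}: both decoders recover $u_{\mathcal{L}_T}$ via rule~\eqref{func5} with no side information at all. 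Your ``main obstacle'' (decoder~2 reconstructing $u^n$ from $y^n$ when the partition was tailored to $Z^n$) therefore evaporates in the paper's treatment. Your route through the full Wyner-Ziv machinery and the inclusion $\mathcal{L}_{T|Z}\subseteq\mathcal{L}_{T|Y}$ is not incorrect, but it is more work than needed and obscures the actual role of condition~5).

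There is one genuine slip in your plan: you have encoder~2 assign $v_{\mathcal{L}_{W|TY}}$ by ``the $(t^n,y^n)$-based maximum-likelihood rule,'' but the encoder never sees $y^n$. The paper's encoder instead uses $\arg\max_a P_{V_i|V^{i-1}T^nX^n}(a|v^{i-1},t^n,x^n)$ on $\mathcal{L}_{W|TY}$, and the proximity of this to the decoder's $(t^n,y^n)$-based reconstruction is precisely what the Wyner-Ziv-type total-variation argument (the analogue of \eqref{proof2}) must establish. Once you correct the encoder rule, your distortion-analysis template matches the paper's.
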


Using the results in \cite{Steinberg04} and \eqref{eq:WZ}, we note that Conditions 1)-5) above imply 
\begin{gather} \label{eq:cond1}
H(T|Z)-H(T|X,Z)=I(X;T|Z)=R_{X|Z}(D_1)\\
H(W|T,Y)-H(W|X,T,Y)=I(X;W|T,Y)=R_{X|Y}(D_2)-R_{X|Z}(D_1).\label{eq:cond2}
\end{gather}
Also, from condition 3) we have $T \rightarrow Y \rightarrow Z$, while from condition 5) we have $T \rightarrow Z \rightarrow Y$. Then, for any $(y_1,z_1) \in \mathcal{Y} \times \mathcal{Z}$, $(y_2,z_2) \in \mathcal{Y} \times \mathcal{Z},$ and $t \in \mathcal{T}$ we have that 
  $$
  P_{T|YZ}(t|y_1,z_1)={P}_{T|YZ}(t|y_1,z_2)={P}_{T|YZ}(t|y_2,z_2).
  $$
 Thus $T$ is independent of $(Y,Z)$ and \eqref{eq:cond1} reduces to
    $$
I(T;X)=H(T)-H(T|X)=R_{X|Z}(D_1).
     $$
Let $(T,W,X,Y,Z)$ be a quintuple of random variables that satisfy conditions 1)-5) and let $(T^n,W^n,X^n,Y^n,Z^n)$ be its $n$ 
independent copies. 
Further, let $U^n=T^nG_n$ and $V^n=W^nG_n$. We denote the joint distribution of the $n$-sequences by 
${P}_{U^nT^nV^nW^nX^nY^nZ^n}$  and   use various conditional and marginal distributions derived from it. Define the coordinate subsets $\mathcal{H}_T$, $\mathcal{L}_T$, $\mathcal{H}_{T|X}$, $\mathcal{L}_{T|X}$, $
\mathcal{H}_{W|TY}$, $\mathcal{L}_{W|TY}$, $\mathcal{H}_{W|XTY}$, $\mathcal{L}_{W|XTY}$ in the same way as \eqref{def1} and \eqref{def2}. 
Further, define the information sets as $\mathcal{I}_T=(\mathcal{L}_T \cup \mathcal{H}_{T|X})^c$ and $\mathcal{I}_W=(\mathcal{L}_{W|TY} \cup \mathcal{H}_{W|XTY})^c$
and observe that
\begin{gather*}
\lim_{n \to \infty} \frac{1}{n}|\mathcal{I}_T|=I(X;T)=R_{X|Z}(D_1) \\
\lim_{n \to \infty} \frac{1}{n}|\mathcal{I}_W|=I(X;W|T,Y)=R_{X|Y}(D_2)-R_{X|Z}(D_1).
\end{gather*}
The coding scheme is similar to the previous section. The bits $u_i$, $i \in \mathcal{H}_{T|X}$, are set to $0$ or $1$ with probability 1/2 independently of each other and of $(X^n,Y^n,Z^n),$ and are known to both the encoder and the decoder. 
The remaining bits of the sequence $u^n$ are determined successively as follows. 
For $i \in \mathcal{I}_T$ we find
$u_i$ in a randomized way according to the distribution
         \begin{equation} \label{random4}
    \Pr(u_i=a)={P}_{U_i|U^{i-1}X^n}(a|u^{i-1},x^n), \quad a\in\{0,1\}
    \end{equation}
and if $i \in \mathcal{L}_T,$ then we put
   \begin{equation} \label{func5}
u_i= \arg\max_{a\in\{0,1\}}  {P}_{U_i|U^{i-1}}(a|u^{i-1}).
   \end{equation}
After determining the entire sequence $u^n$, the encoder calculates the sequence $t^n=u^nG_n$ and uses it to determine $v^n$ according to the following rule. 
The bits $v_i$, $i \in \mathcal{H}_{W|XTY}$, are drawn uniformly from $\{0,1\}$ independently of each other, and of $u_{\mathcal{H}_{T|X}}$ and $(X^n,Y^n,Z^n)$. Also make these frozen bits known to both the encoder and the decoder.
If $i \in \mathcal{I}_W,$ then $v_i$ is chosen in a randomized way from the distribution
    \begin{equation} \label{random5}
    \Pr(v_i=a)={P}_{V_i|V^{i-1}T^nX^n}(a|v^{i-1},t^n,x^n) , \quad a\in\{0,1\}
         \end{equation}
and if $i \in \mathcal{L}_{W|TY}$ then
  $$
v_i=\arg\max_{a\in\{0,1\}} {P}_{V_i|V^{i-1}T^nX^n}(a|v^{i-1},t^n,x^n).
   $$
This concludes the description of the encoding scheme.

The decoder constructs reproduction sequences of the source upon being provided with
the sequences $u_{\mathcal{I}_T}$ and $v_{\mathcal{I}_W}.$ We will assume that the functions $f_1$ and $f_2$ 
from Theorem \ref{thm:SRWZ} are available to the decoder.
At the coarse layer the decoder determines the sequence $u_{\mathcal{L}_T}$ 
successively using the rule \eqref{func5}. 
Then the decoder calculates the sequence $ t^n= u^nG_n$ and forms the reproduction sequence 
$r^n=(f_1( t_1,z_1),...,f_1( t_n,z_n))$. 
At the refinement layer, the decoder uses $ t^n$ to determine the sequence 
$v_{\mathcal{L}_{W|TY}}$ successively according to the rule
  $$ 
v_i=\arg\max_{a\in\{0,1\}} {P}_{V_i|V^{i-1}T^nY^n}(a| v^{i-1}, t^n,y^n).
  $$ 
Upon finding $v^n,$ the decoder computes $w^n=v^nG_n$. The reproduction sequence at refinement stage is found as $s^n=(f_2( w_1,y_1),...,f_2( w_n,y_n))$. 

By assumptions the communication rates approach $R_{X|Z}(D_1)$ and $R_{X|Y}(D_2)$ \eqref{eq:cond1}, \eqref{eq:cond2}. For a fixed assignment of the frozen bits $u_{\mathcal{H}_{T|X}}$ and $v_{\mathcal{H}_{W|XTY}}$, the average distortions are given by
\begin{gather*}
D_{1,n} (u_{\mathcal{H}_{T|X}})=E_{X^nZ^n} \Big[ {E} [d(X^n,r^n(u_{\mathcal{H}_{T|X}},X^n,Z^n))] \Big]\\
D_{2,n} (u_{\mathcal{H}_{T|X}},v_{\mathcal{H}_{W|XTY}})= E_{X^nY^n} \Big[ {E} [d(X^n,s^n(u_{\mathcal{H}_{T|X}},v_{\mathcal{H}_{W|XTY}},X^n,Y^n))] \Big]
\end{gather*}
where the inner expectation is taken with respect to the distributions \eqref{random4} and \eqref{random5}, respectively. 

Next we show that expected distortions computed by averaging over all choices of the frozen bits $u_{\mathcal{H}_{T|X}}$ and $v_{\mathcal{H}_{W|XTY}}$
\remove{We consider the average of $D_{1,n} (u_{\mathcal{H}_{T|X}})$ and $D_{2,n} (u_{\mathcal{H}_{T|X}},v_{\mathcal{H}_{W|XTY}})$ over all choices for frozen bits in the following theorem.}
are close to the chosen levels $D_1$ and $D_2.$
\begin{theorem} For any $0 < {\beta}' < \beta < 1/2$ 
     \begin{align*}
     {E}[D_{1,n} (U_{\mathcal{H}_{T|X}})] \leq D_1+O(2^{-n^{{\beta}'}})\\
       {E}[D_{2,n} (U_{\mathcal{H}_{T|X}},V_{\mathcal{H}_{W|XTY}})] \leq D_2+O(2^{-n^{{\beta}'}}).
     \end{align*}
Consequently, there exists a choice of frozen bits $u_{\mathcal{H}_{T|X}}$ and $v_{\mathcal{H}_{W|XTY}}$ 
such that the distortion values of the reproduction sequences approach the values $D_1$ and $D_2.$
\end{theorem}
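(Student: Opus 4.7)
The plan is to follow the two-layer template of the proofs in Sections~\ref{sect:SR} and \ref{sect:WZ}, applied in turn to each stage. The two layers decouple cleanly because at the coarse layer both the encoder and the decoder use the identical argmax rule \eqref{func5} on $P_{U_i|U^{i-1}}$, so the decoder reconstructs $t^n$ exactly. Consequently $t^n$ can be treated as common randomness at the refinement layer, which then reduces to a Wyner--Ziv instance with ``source'' $X^n$ at the encoder and ``side information'' $Y^n$ at the decoder.

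For the coarse layer, define $Q^{(1)}_{U^nX^nZ^n}$ to be the joint law induced by the randomized encoder together with the memoryless channel $P_{Z^n|X^n}$. Since the encoder never inspects $Z^n$, this distribution factors as $Q_{U^nX^n}(u^n,x^n)\,P_{Z^n|X^n}(z^n|x^n)$, where $Q_{U^nX^n}$ has exactly the product form studied in Section~\ref{sect:SR}. Condition 3 of Theorem~\ref{thm:SRWZ} gives $T\to X\to Z$, hence $U^n\to X^n\to Z^n$, so $P_{U^nX^nZ^n}$ factors identically, and
\[
\|P_{U^nX^nZ^n}-Q^{(1)}_{U^nX^nZ^n}\|_1=\|P_{U^nX^n}-Q_{U^nX^n}\|_1=O(2^{-n^{\beta'}})
\]
by the chain of inequalities (a)--(f) in \eqref{proof1}. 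Since $d(x,f_1(t,z))$ is bounded, $E[D_{1,n}]=E_{Q^{(1)}}[d(X^n,r^n)]\le E_P[d(X^n,f_1(T^n,Z^n))]+(\max d)\cdot O(2^{-n^{\beta'}})\le D_1+O(2^{-n^{\beta'}})$.

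For the refinement layer, define $Q^{(2)}_{V^nT^nX^nY^n}$ to be the law induced by the encoder's randomized assignment of $v_i$ from $(T^n,X^n)$, and define $\hat Q^{(2)}_{V^nT^nX^nY^n}$ by replacing the encoder's argmax rule on $P_{V_i|V^{i-1}T^nX^n}$ with the decoder's argmax on $P_{V_i|V^{i-1}T^nY^n}$ for $i\in\mathcal{L}_{W|TY}$, mirroring the $(Q,\hat Q)$ construction of Section~\ref{sect:WZ}. The bound $\|P_{V^nT^nX^nY^n}-Q^{(2)}\|_1=O(2^{-n^{\beta'}})$ follows from the chain (a)--(f) in \eqref{proof1} applied to the indices in $\mathcal{L}_{W|TY}\cup\mathcal{H}_{W|XTY}$. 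The passage from $Q^{(2)}$ to $\hat Q^{(2)}$ uses the pairwise-disjoint sets $\mathcal{A}_{k_i}$ exactly as in Section~\ref{sect:WZ}, and its key ingredient is the Markov identity $Z(V_{k_i}|V^{k_i-1},T^n,X^n)=Z(V_{k_i}|V^{k_i-1},T^n,X^n,Y^n)$, which holds because condition 3 gives $(T,W)\to X\to Y$ coordinatewise memorylessly, and invertibility of $G_n$ propagates this to $V^n\to(T^n,X^n)\to Y^n$. Combining both bounds via the triangle inequality yields $\|P_{W^nT^nX^nY^n}-\hat Q^{(2)}\|_1=O(2^{-n^{\beta'}})$, so $E[D_{2,n}]\le D_2+O(2^{-n^{\beta'}})$. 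Averaging over the random frozen bits then produces a deterministic choice attaining both bounds simultaneously. The main obstacle is verifying that the extra conditioning on $t^n$ at the refinement stage does not damage the Wyner--Ziv Markov structure between $X^n$ and $Y^n$; this reduces to the single Bhattacharyya identity above and is the only place where all five structural conditions of Theorem~\ref{thm:SRWZ} must interact with the polarization argument.
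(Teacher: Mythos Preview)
Your proposal is correct and follows essentially the same route as the paper: bound the coarse-layer total variation $\|P_{T^nX^nY^nZ^n}-\hat Q_{T^nX^nY^nZ^n}\|_1$ via the telescoping chain \eqref{proof1}, handle the refinement-layer conditional by rerunning the Wyner--Ziv argument \eqref{proof2} with the Markov identity $Z(V_i|V^{i-1},T^n,X^n)=Z(V_i|V^{i-1},T^n,X^n,Y^n)$ (which indeed follows from condition~3), and assemble the two pieces with the triangle inequality. The one point you should make explicit is that the $(T^n,X^n,Y^n)$-marginal entering your $Q^{(2)},\hat Q^{(2)}$ is the encoder-induced law, not $P_{T^nX^nY^n}$; the extra triangle-inequality step absorbing this discrepancy via the coarse-layer bound is precisely the split the paper writes out in its second displayed estimate.
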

\begin{proof}
The decoded sequences $(U^n, V^n,T^n, W^n)$ are random functions of $(X^n,Y^n,Z^n)$. 
Denote their joint distribution by $\hat Q_{U^n  V^n T^n  W^n X^n Y^n Z^n}$. To prove the theorem we only need to bound the $L_1$ distance between $\hat Q_{W^n T^n X^n Y^n Z^n}$ and ${P}_{W^nT^nX^nY^nZ^n}$.
 $$
 \hat Q_{U^nX^nY^nZ^n}(u^n,x^n,y^n,z^n)={P}_{X^nY^nZ^n}(x^n,y^n,z^n) \prod_{i=1}^n \hat Q_{U_i|U^{i-1}X^nY^nZ^n}(u_i|u^{i-1},x^n,y^n,z^n) 
 $$
where
  \begin{align*}
\hat Q_{U_i|U^{i-1}X^nY^nZ^n}(u_i|u^{i-1},x^n,y^n,z^n)&= 
\hat Q_{U_i|U^{i-1}X^n}(u_i|u^{i-1},x^n) \\
& = \begin{cases} \frac{1}{2} & \text{if } i \in \mathcal{H}_{T|X} \\ \mathbbm{1}[{P}_{U_i|U^{i-1}}(u_i|u^{i-1}) > {P}_{U_i|U^{i-1}}(u_i \oplus 1|u^{i-1})] & \text{if } i \in \mathcal{L}_T \\ {P}_{U_i|U^{i-1} X^n}(u_i|u^{i-1},x^n) & \text{if } i \in \mathcal{I}_T. \end{cases}
\end{align*}

The $L_1$ distance between the distributions $\hat Q_{T^nX^nY^nZ^n}$ and ${P}_{T^nX^nY^nZ^n}$ can be bounded as follows
\begin{align*}
 \|\hat Q_{T^nX^nY^nZ^n}&-{P}_{T^nX^nY^nZ^n}\|_1 = \|\hat Q_{U^nX^nY^nZ^n}-{P}_{U^nX^nY^nZ^n}\|_1 \\
&\overset{\eqref{eq:Abel}}{=} \sum_{u^n,x^n,y^n,z^n} \Big| \sum_i(\hat Q(u_i|u^{i-1},x^n,y^n,z^n)-P(u_i|u^{i-1},x^n,y^n,z^n) {P}(x^n,y^n,z^n) \\
& \hspace*{.5in}\times\Big( \prod_{j=1}^{i-1}{P}(u_j|u^{j-1},x^n,y^n,z^n) \Big) \Big( \prod_{j=i+1}^n \hat Q(u_j|u^{j-1},x^n,y^n,z^n) \Big) \Big| \\
&\overset{(a)}{\leq} \sum_{i \in \mathcal{I}_T^c} \sum_{u^i,x^n} \Big( \Big| \hat Q(u_i|u^{i-1},x^n)-P(u_i|u^{i-1},x^n) \Big| {P}(u^{i-1},x^n) \Big) \\
& {\leq} O(2^{-n^{{\beta}'}}).
\end{align*}
where
(a) holds true because $T \rightarrow X \rightarrow Y \rightarrow Z$ implies 
$U^n \rightarrow X^n \rightarrow Y^n \rightarrow Z^n$. 
Thus ${P}(u_i|u^{i-1},x^n,y^n,z^n)={P}(u_i|u^{i-1},x^n)$.
The last equality is obtained in the same way as the analogous result in \eqref{proof1}.

The $L_1$ distance between the distributions $\hat Q_{W^nT^nX^nY^nZ^n}$ and $P_{W^nT^nX^nY^nZ^n}$ is
 bounded as follows
\begin{align*}
\|\hat Q_{W^nT^nX^nY^nZ^n}&-{P}_{W^nT^nX^nY^nZ^n}\|_1 
= \|\hat Q_{V^nT^nX^nY^nZ^n}-P_{V^nT^nX^nY^nZ^n}\|_1 \\
& \leq \sum_{v^n,t^n,x^n,y^n,z^n} \Big( \hat Q(v^n|t^n,x^n,y^n,z^n) \Big| \hat Q(t^n,x^n,y^n,z^n)- P(t^n,x^n,y^n,z^n) \Big| \Big)  \\
&  \hspace*{.5in}+ \sum_{v^n,t^n,x^n,y^n,z^n} \Big( \Big|\hat Q(v^n|t^n,x^n,y^n,z^n) - P(v^n|t^n,x^n,y^n,z^n) \Big| {P}(t^n,x^n,y^n,z^n) \Big)  \\
&= \|\hat Q_{T^nX^nY^nZ^n}-P_{T^nX^nY^nZ^n}\|_1 \\
& \hspace*{.5in}+ \sum_{v^n,t^n,x^n,y^n,z^n} \Big( \Big|\hat Q(v^n|t^n,x^n,y^n,z^n) - P(v^n|t^n,x^n,y^n,z^n) \Big| {P}(t^n,x^n,y^n,z^n) \Big) \\
& {\leq} O(2^{-n^{{\beta}'}})
\end{align*}
where the last inequality follows the same steps as the proof of \eqref{proof2}.
\end{proof}
This shows that the polar coding scheme described above supports communication for successive refinement 
with side information as given in Definition~\ref{def:scsi}.

In conclusion we note that a concurrent work \cite{Gulcu14} relies on the same general setting as this paper, more specifically, on source coding with side information, to propose and analyze a construction of polar codes for distributed computation of functions in certain two- and multi-terminal networks.

\bibliographystyle{IEEEtran}
\bibliography{successive}
\end{document}